\documentclass[12pt]{iopart}

\usepackage{iopams} 

\expandafter\let\csname equation*\endcsname\relax
\expandafter\let\csname endequation*\endcsname\relax 

\usepackage{color}
\usepackage{amssymb}
\usepackage{amsthm}
\usepackage{MnSymbol}
\usepackage{graphicx}
\usepackage{dcolumn}
\usepackage{bm}
\usepackage{graphicx}
\usepackage{epsfig}
\usepackage{mathrsfs}

\usepackage[all,2cell]{xy}

\newcommand{\be}{\begin{equation}}
\newcommand{\ee}{\end{equation}}
\newcommand{\bea}{\begin{eqnarray}}
\newcommand{\eea}{\end{eqnarray}}
\newcommand{\bes}{\begin{subequations}\bea}
\newcommand{\ees}{\eea\end{subequations}}
\newcommand{\ba}{\begin{array}}
\newcommand{\ea}{\end{array}}
\newcommand{\bos}[1] {\boldsymbol{#1}}
\newcommand{\prl}{[}
\newcommand{\prr}{]}
\newcommand{\flux}{\varphi}
\newcommand{\fluxr}{{\varphi'}}

\newtheorem{definition}{Definition}
\newtheorem{theorem}{Theorem}

\begin{document}
 
\title{BEST statistics of Markovian fluxes: \\ a tale of Eulerian tours and Fermionic ghosts}

\author{Matteo Polettini$^1$}

\address{$^1$ Facult\'e des Sciences, de la Technologie et de la Communication
162 A, avenue de la Fa\"\i{}encerie, L-1511 Luxembourg (Grand Duchy of Luxembourg) }

\ead{matteo.polettini@uni.lu}

\begin{abstract} 
We provide an exact expression for the statistics of the fluxes of Markov jump processes at all times, improving on asymptotic results from large deviation theory. The main ingredient is a  generalization of the BEST theorem in enumeratoric graph theory to Eulerian tours with open ends. In the long-time limit we reobtain Sanov's theorem for Markov processes, which expresses the exponential suppression of fluctuations in terms of relative entropy. The finite-time power-law term, increasingly important with the system size, is a spanning-tree determinant that, by introducing Grassmann variables,  can be absorbed into the effective Lagrangian of a Fermionic ghost field on a metric space, coupled to a gauge potential. With reference to concepts in  nonequilibrium stochastic thermodynamics, the metric is related to the dynamical activity that measures net communication between states, and the connection is made to a previous gauge theory for diffusion processes.
\end{abstract}

\pacs{02.50.Ga; 02.10.Ox; 04.60.Nc}


\vspace{2pc}
\noindent{\it Keywords}: Large deviations, Eulerian tours, spanning trees, Fermionic field theory




which will be usefulas \section{Introduction}


One of the lesser-known intuitions of Einstein,  which he applied the critical opalescence of mixtures, was the idea of ``inverting Boltzmann's formula'' to obtain an expression for the probability of a fluctuating variable in terms of entropy  \cite{einstein}. A similar mechanism underlies the important result by Sanov \cite{sanov} in what came to be known as the theory of large deviations (LDT): relative entropy describes the asymptotic behavior of the probability of an increasing number of independent and identically distributed variables. The theory has since grown to embrace Markov processes and beyond.

Recent years have witnessed a burst of interest in the characterization of the thermodynamics of nonequilibrium systems by means of classical and quantum Markov processes \cite{esposito}.  The celebrated Fluctuation Theorem \cite{crooks,kurchan,maes,lebowitz,gaspard,schutz,polettiniFT}, which expresses a symmetry of the large fluctuations of thermodynamic observables, paved the way for a thorough application of the full spectrum of methods and results from LDT \cite{touchette1}. Here,  ``large'' often indicates the long time limit, regulating the exposure of the system to external agents, whereas in equilibrium statistical mechanics one usually refers to size.  This revival brought about computational \cite{giardina,tailleur1} and theoretical \cite{nemoto,koza,wachtel,plenio} techniques for a more efficient computation of rate functions and cumulants of time-extensive observables, a better understanding of the analogies and differences between equilibrium and nonequilibrium systems \cite{lecomte2,andrieux,chetrite1,tailleur2}, of phase transitions along most probable paths \cite{hurtado,bouchet}, of the hydrodynamic limit \cite{jona} and of exotic properties of time-intensive observables (e.g. the efficiency) \cite{verley,polettinieff}.

Large deviation principles meliorate the central limit theorem. Still there are further margins for improvement in considering finite-time effects. However, unfortunately, obtaining rate functions in LDT is often prohibitive, but for very simple or special systems \cite{derrida,lazarescu2}. The reason for this is that macroscopic observables of interest are aggregate quantities to whom contribute many, more fundamental, correlated processes. General expressions are only available when one refines the description to all empirical fluxes and densities populating the system \cite{maes2}.

Indeed, as we will prove in this paper, for Markov processes on a discrete state space, this level of description (sometimes called ``level 2.5'') is so complete that it actually allows for the derivation of the full joint probability of the fluxes and of the empirical measure. The finite-time corrections are accounted for by a combinatorial factor counting Eulerian tours or, equivalently, spanning trees on a suitably defined digraph. Employing the same tools as in a recently proposed Fermionic field theory for trees and forests \cite{caracciolo,abdesselam}, we express the new finite-time term as an effective Lagrangian of a ghost field coupled to a gauge potential, on a metric. The metric depends on the dynamical activity of the system \cite{maes2}, and a connection is made to a previously proposed gauge theory of nonequilibrium thermodynamics by the Author \cite{polettini1}.  Finally, we re-derive large deviation rate functions by a rather different procedure than is usually employed. This direct evaluation allows to appreciate a subtlety related to the large deviations of the statistics of the total number of jumps, which is a stochastic variable affecting the set of measurable events along a trajectory.

To facilitate the reading, we will first provide all results in Sec.\,\ref{restless} in the simpler setting of a restless, discrete-time Markov chain, reobtaining Sanov's theorem for Markov chains and discussing finite-time corrections to small fluctuations. In Sec.\,\ref{fermionic} we discuss the ghost Lagrangian, and in Sec.\,\ref{continuous} we generalize to Markov jump processes in continuous time. When considering large deviations for the joint fluxes and empirical densities, in \S\,\ref{largecont}, we discuss the subtle role played by the total number of jumps. We draw conclusions in Sec.\,\ref{conclusions}.

\section{The simplest case: restless Markov chains}

\label{restless}

\subsection{Setup}

Consider a graph  $G =(X,E)$ consisting of a finite number of vertices $x,y,\ldots \in X$ connected by edges $xy \in E$. We suppose the graph has no loops (edges that connect a state to itself) nor multiple edges. The graph is undirected (arrows do not have a preferential direction), but eventually an arbitrary orientation can be established, which will be useful in Sec.\,\ref{fermionic}.

On such state space, a Markovian walker performs a path $\bos{x} = (x_0,\ldots,x_N)$,  departing from $x_0$ and at intervals of unit time performing transitions to neighboring vertices $x_i$. If edge $xy$ exists, then both the probability $\pi_{xy}$ of jumping from $y$ to $x$ and its reverse $\pi_{yx}$ are nonvanishing, while we assume null probability of resting at states, $\pi_{xx} = 0, \forall x$. The probability of a path $\bos{x}$ is given by
\bea
P\prl \bos{x} \prr = \prod_{i=0}^{N-1} \pi_{x_{i+1}x_i}. \label{eq:pathprob}
\eea
Introducing the transition matrix $\Pi = (\pi_{xy})_{x,y}$, 
the probability $\rho_i(x)$ of being at $x$ at time $i$ satisfies $\bos{\rho}_{i+1} = \Pi \bos{\rho}_i$, given $\rho_0(x) = \delta_{x,x_0}$. Under the above assumptions there exists a unique invariant probability $\bos{\rho}$ that satisfies $\Pi \bos{\rho} = \bos{\rho} = \bos{\rho}_{N \to +\infty}$. 

The objects of crucial interest in this paper are the {\it fluxes}, stochastic variables that count the number of transitions between states along a path:
\bea
\flux_{xy}\prl \bos{x} \prr & := & \sum_{i = 0}^{N-1} \delta_{x_{i+1},x} \, \delta_{x_i,y} . \label{eq:fluxes} 
\eea
Flux $\flux_{xy}$ increases by $+1$ every time transition $x \gets y$ occurs. We collectively denote the fluxes $\bos{\flux} = (\flux_{xy})_{x,y}$. Notice that
$\sum_{x,y} \flux_{yx}\prl \bos{x} \prr = N$. It is often physically meaningful to focus on the symmetric and antisymmetric parts of the fluxes, called respectively {\it activities} and {\it currents}:
\bea
\flux_{xy}^{\pm} = \frac{1}{2}\left( \flux_{xy} \pm \flux_{yx} \right). \label{eq:actcur}
\eea

Physical observables typically have the form $\sum_{x,y} \mathrm{obs}_{xy} \flux_{xy}\prl \bos{x} \prr$. In particular in the context of nonequilibrium statistical mechanics, thermodynamic observables like the entropy production, the heat fluxes etc. are antisymmetric, $\mathrm{obs}_{xy} = - \mathrm{obs}_{yx} $, hence their statistics only depends on the currents. However, while such antisymmetric observables suffice to capture the physics near equilibrium, e.g. via the fluctuation-dissipation relation and the minimum entropy production principle \cite{polettiniminEP}, it is now understood that out of equilibrium also symmetric observables related to dynamical effects enter into play in a crucial manner. For example, they are necessary to extend the fluctuation-dissipation relation out of nonequilibrium steady states \cite{maes1}.

Importantly, any such kind of observable only depends on the fluxes and not on the precise trajectory that generates that configuration of fluxes. Then a preliminary step is to obtain the probability of the fluxes
\bea
P\prl \bos{\flux} \prr  & = &  \sum_{\bos{x}} \bos{\delta}_{\bos{\flux},\bos{\flux}\prl \bos{x} \prr} P\prl \bos{x} \prr \\
& = &  \int \mathcal{D}(\ldots) \, e^{\mathcal{L}(\ldots)} \label{eq:lag}
\eea
where $\bos{\delta}$ is a product of Kronecker deltas, one per flux, and the second expression is a sort of yet-to-determine representation in terms of an effective Lagrangian. Asymptotic expressions for $N\to + \infty$ are well-known. The task of this paper is to exactly evaluate $P\prl \bos{\flux} \prr$ at all times and to make sense of the second expression. The problem of further marginalizing to evaluate the statistics of actual observables is intractable in full generality. 

\subsection{BEST statistics of the fluxes}

Given Eq.\,(\ref{eq:fluxes}), we can recast the path probability Eq.\,(\ref{eq:pathprob}) as
\bea
P\prl \bos{x} \prr = \prod_{x,y} \pi_{xy}^{\flux_{xy}} . \label{eq:path}
\eea
We can then marginalize for the fluxes to obtain
\bea
P\prl \bos{\flux} \prr = \varepsilon\prl \bos{\flux} \prr \exp \left(\sum_{x,y} \flux_{xy} \ln \pi_{xy}\right),  \label{eq:Pphi}
\eea
where
\bea
\varepsilon\prl \bos{\flux} \prr = \sum_{\bos{x}} \bos{\delta}_{\bos{\flux},\bos{\flux}\prl \bos{x} \prr} 
\eea
is the number of distinguishable paths that perform $\flux_{xy}$ times transition $x \gets y$, for all transitions.

The key step in our treatment is the determination of this number, basing on a generalization of the so-called BEST\footnote{After de Bruijn, van Aardenne-Ehrenfest, Smith and Tutte.} theorem in enumeratoric graph theory. Let us describe the general procedure; in \ref{app3} we provide an example of what follows. Corresponding to a set of fluxes $\bos{\flux} = (\flux_{xy})_{x,y}$ we can draw a {\it digraph} $(X,\bos{\flux})$ with $\flux_{xy}$ arrows $y \to x$. In this case, the arrows are considered to be directed. The {\it influx} and the {\it outflux} (in graph-theoretic jargon, {\it in-degree} and {\it out-degree}) at vertex $x$ are respectively defined  as the number of transitions towards and out of $x$,
\bea
\flux^{\mathrm{in}}_x\prl \bos{x} \prr := \sum_{y} \flux_{xy}\prl \bos{x} \prr \label{eq:degree}, \qquad 
\flux^{\mathrm{\,out}}_x\prl \bos{x} \prr := \sum_{y} \flux_{yx}\prl \bos{x} \prr. 
\eea
Since a Markovian path can be drawn along the edges of the graph by ``never lifting the pencil'', then the influxes and the outfluxes of all vertices but the initial and the final ones balance each other, and in general we have
\bea
\flux^{\mathrm{in}}_x\prl \bos{x} \prr - \flux^{\mathrm{\,out}}_x\prl \bos{x} \prr = \delta_{x_N,x} - \delta_{x_0,x}.  \label{eq:inout}
\eea
When $x_0 = x_N$ the digraph is said to be {\it balanced}. Historically the first theorem in graph theory, motivated by the problem whether it was possible to traverse the seven K\"onisberg bridges exactly once to arrive at the place of departure, was Euler's proof that a digraph is balanced if and only if there exists a cyclic path, called {\it Eulerian tour}, that traverses each arrow exactly once. Importantly for what follows, notice that in an Eulerian tour on a balanced digraph each arrow is accounted distinctly, while the paths we are considering are blind to the arrow's identity. The BEST theorem in enumeratoric graph theory counts the number of different Eulerian tours on a balanced digraph (see \cite{stanley}, \S 5.6).

Generalizing, we will call a digraph whose in- and out- degrees obey Eq.\,(\ref{eq:inout}) a {\it quasi-balanced digraph}. Vertex $x_0$ is called the {\it source} and vertex $x_N$ the {\it sink}. In \ref{appA} it is proven that a digraph is quasi-balanced if and only if it is possible to draw a {\it quasi-Eulerian tour} going from $x_0$ to $x_N$ and such that, along the lines of the BEST theorem, the number of independent quasi-Eulerian tours is given by
\bea
E = T_{x_N} \prod_x \big(\flux^{\mathrm{\,out}}_{x} -1 + \delta_{x,x_N}\big)!,
\eea
where $T_{x}$ is the number of {\it arborescenses}, or {\it directed rooted spanning trees} pointing at $x$. This number can be expressed as a matrix determinant \cite{BEST}. To this purpose, we define the $|X_N| \times |X_N|$ digraph Laplacian matrix (also called Kirchhoff-Tutte matrix) 
\bea
\Delta_{x,y} := \left\{ \ba{ll} - \flux_{xy},  & x \neq y \\ \flux^{\mathrm{\,out}}_y, & x = y \ea \right., \quad \forall x,y \in X_N
\eea
where $X_N$ is the set of states that at time $N$ have been touched at least once by the trajectory, that is such that $\flux^{\mathrm{in}}_x + \flux^{\mathrm{\,out}}_x > 0$. Notice that $\sum_{x} \Delta_{x,y}  = 0$, hence $\det \Delta = 0$.
Notoriously, the matrix-tree theorem in algebraic graph theory establishes that the cofators of the $x$-th column of the Laplacian matrix give the number of oriented spanning trees with root $x$. In particular, for a digraph we have that 
\bea
T_{x_N} = \det \Delta_{x_N} 
\eea
where $\Delta_{x_N}$ is obtained by removing from $\Delta$ the row and column corresponding to $x_N$. Finally, to obtain the number of distinguishable paths we should identify all permutations of edges in the same direction, introducing a Gibbs-type factor. We finally obtain
\bea
\varepsilon\prl \bos{\flux} \prr = \frac{E\prl \bos{\flux} \prr}{\prod_{x,y} \flux_{xy}!} = \det \Delta_{x_N}\prl \bos{\flux} \prr \, \frac{\prod_x \left(\flux^{\mathrm{\,out}}_{x} -1 + \delta_{x,x_N}\right)!}{\prod_{x,y} \flux_{xy}!},
\eea
where we made explicit the dependencies on the fluxes. This, together with Eq.\,(\ref{eq:Pphi}), constitutes the main result of this paper. Putting them together, 
assuming that the fluxes 
$\bos{\flux}$ are large enough so that we can approximate $\flux^{\mathrm{in}}_{x} \sim \flux^{\mathrm{\,out}}_{x} = \flux_x$ and use Stirling's formula $\flux_{xy}! \sim \exp ( \flux_{xy} \ln \flux_{xy})$, we obtain 
\bea
P\prl \bos{\flux} \prr = \det \Delta_{x_N}\prl \bos{\flux} \prr \; e^{-I\prl \bos{\flux} \prr} \label{eq:PPP}
\eea
where
\bea
I\prl \bos{\flux} \prr := \sum_{x,y} \flux_{xy} \ln \frac{\flux_{xy}}{\pi_{xy}\flux_y}.
\eea
Interestingly, the digraph Laplacian above can be seen as the generator of a continuous-time Markov chain for which a particular configuration of fluxes is typical. The same prefactor has also been recently derived in Ref.\,\cite{gatver} by completely different methods.

\subsection{Large deviations}

It will now be convenient to scale quantities in the extensive parameter $N$. Let us define the {\it flux ratio} $\bos{\fluxr}$ as the probability that, if a transition occurs, that transition is $x \gets y$, 
\bea
\fluxr_{xy} := \frac{\flux_{xy}}{\sum_{x,y}\flux_{xy}} = \frac{\flux_{xy}}{N}.
\eea
Notice that the flux ratio has the meaning of a flux rate, in discrete time.
Letting $N$ be large but finite (so that the flux ratios remain discrete variables), the probability of the flux ratios is given by $P'\prl \bos{\fluxr} \prr = P\prl N\bos{\fluxr}\prr$ and we define
\bea
I'\prl \bos{\fluxr} \prr := \frac{I(N\bos{\fluxr})}{N} = D(\bos{\fluxr}\Vert \bos{\pi}^\fluxr)
\eea
where $\pi^\fluxr_{xy} := \pi_{xy}\fluxr_y$, and $D(\,\cdot\,\Vert\,\cdot\,)$ is the Kullback-Leibler divergence. Notice that in  Eq.\,(\ref{eq:PPP}) while this entropic term decreases exponentially, the determinant increases as $N^{|X_N|}$, but since $|X_N|$ is at most $|X|$ this term is subdominant \footnote{In the limit $N \to \infty$ the flux rates become continuous variables whose probability density scales like $N^{2|E|}$, which again provides a subdominant term.}:
\bea
- \frac{\log P'_N\prl \bos{\fluxr} \prr}{N} = D(\bos{\fluxr}\Vert \bos{\pi}^\fluxr) + |X_N| \frac{\ln N}{N} + \frac{\ln \det \Delta_{x_N}\prl \bos{\fluxr} \prr}{N}.
\eea
Then $I' = - \lim_{N \to +\infty} \log P'_N /N$ is the large deviation rate function of the process. We thus recover, by a different procedure than usually employed, the Sanov theorem for Markov chains \cite{sanov,kesidis,fayolle}.

The most probable configuration of fluxes (marked $^\ast$) is the minimum of the rate function, found by solving $D(\bos{\fluxr}^\ast\Vert \bos{\pi}^{\fluxr^\ast}) = 0$, which yields $\bos{\fluxr}^\ast = \bos{\pi}^{\fluxr^\ast}$. This solution is indeed unique, as by summing over $y$ we obtain $\fluxr_x = \sum_y \pi_{xy} \fluxr_y$, which together with $\sum_x \fluxr_x = 1$ grants that $\bos{\fluxr} = \bos{\rho}$ is the unique invariant state and $\bos{\fluxr}^\ast =  \bos{\pi}^{\rho}$.

\subsection{Fluctuation relation}

Another testing ground for our formula is the fluctuation relation (see Refs.\,\cite{polettiniFT,polettiniGRAPH} for a general treatment of the fluctuation relation for the currents). Let us reverse the direction of all the fluxes by defining
\bea
\flux^\dagger_{xy} = \flux_{yx} .
\eea
Notice that in the reversed digraph $(X,\bos{\flux}^\dagger)$, $x_0$ plays the role of the sink and $x_N$ of the source. It is obvious that there is a one-to-one correspondence between the number of Eulerian tours from $x_0$ to $x_N$ in $(X,\bos{\flux})$ and that of Eulerian tours from $x_N$ to $x_0$ in $(X,\bos{\flux}^\dagger)$. Then we easily obtain that
\bea
\frac{P\prl \bos{\flux} \prr}{P\prl\bos{\flux}^\dagger\prr} =  \exp \sum_{x,y} \flux^-_{xy} \ln \frac{\pi_{xy}}{\pi_{yx}}.
\eea
Notice that only currents enter this expression. Hence one can finally marginalize to obtain a finite-time fluctuation relation for the currents. The functional at exponent in the right-hand side is called the {\it entropy production} or Lebowitz-Spohn action functional.  

Finally, the fact that there is a one-to-one correspondence between Eulerian tours on $(X,\bos{\flux})$ and of reversed Eulerian tours on $(X,\bos{\flux}^\dagger)$ implies that the number $T_{x_N}$ of oriented  spanning trees with root at $x_N$ in $(X,\bos{\flux})$ and that $T^\dagger_{x_0}$ of spanning trees with root at $x_0$ in $(X,\bos{\flux}^\dagger)$ are related by
$T_{x_N}/T^\dagger_{x_0} = \flux^{\mathrm{\,out}}_{x_N} / \flux^{\mathrm{\,out}}_{x_0}$. This is far from obvious as the shapes of these spanning trees might be very different. 

\subsection{Small fluctuations} 

In this section we consider small fluctuations out of the most probable solution, at sufficiently large but finite $N$, evaluating the leading correction coming from the finite-time term. We set
\bea
\bos{\flux} = N \left(\bos{\pi}^\rho  + \frac{1}{\sqrt{N}} \, \delta\bos{\flux} \right)
\eea
where $N^{-1/2}\delta\bos{\flux} \ll \bos{\pi}^\rho$ for all edges. Here, it is assumed that ``reasonably probable'' fluctuations scale with $\sqrt{N}$ while most probable values scale with $N$.  
The leading term in the large deviation rate function is of order $O(1)$. In fact, 
expanding in Taylor series one obtains
\bea
I\prl \bos{\flux} \prr
& \approx &  \frac{1}{N} \sum_{x,y} \frac{ \left(\flux_{xy}   - \pi_{xy} \flux_{y} \right)^2 }{\pi_{xy} \rho_y}   \\
& \approx &  \sum_{x,y} \frac{\delta \flux_{xy} }{\pi_{xy} \rho_y}  \left(\delta \flux_{xy}   - \pi_{xy} \delta \flux_{y} \right)
\eea
Let us now show that finite-time corrections are of order $N^{-1/2}$. Using the log-trace formula for the determinant, we have
\begin{multline}
\det \Delta_{x_N}\prl \bos{\flux} \prr \approx N^{|X_N|}  \det \Delta_{x_N} [\bos{\pi}^\rho] \,
\exp \tr \log \left(\mathrm{Id}_{|X_N|} +  N^{-1/2} \Delta_{x_N} [\bos{\pi}^\rho]^{-1}  \Delta_{x_N} [\delta\bos{\flux}] \right). 
\end{multline}
As regards the prefactor, notice that $\Delta_{x_N} \prl \bos{\pi}^\rho \prr = \Delta_{x_N} \prl \bos{\pi} \prr \, \mathrm{diag}(\rho_x)_{x\neq x_N}$. Taking the determinant
\bea
\det \Delta_{x_N} \prl \bos{\pi}^\rho \prr = \det \Delta_{x_N} \prl \bos{\pi} \prr \prod_{x \neq x_N} \rho_x
\eea
we notice that the determinant of the minor of the Laplacian matrix containing the transition probabilities appears. It is well known that, by the matrix-tree theorem for Markov chains \cite{gaveau}, we have $\rho_{x_N} =  \Delta_{x_N} \prl \bos{\pi} \prr / T(\bos{\pi})$ with the {\it spanning-tree polynomial}
\bea
T(\bos{\pi}) = \sum_x \sum_{\mathcal{T}_x} \prod_{e \in \mathcal{T}_x} \pi_e,
\eea
where the sum is over all possible roots $x$ and over oriented spanning trees $\mathcal{T}_x$ with root $x$, viz. maximal sets of edges of the graph that contain no cycles and such that there exists a unique directed path from each vertex to $x$. The product is over all edges belonging to the spanning tree.

Putting all together we obtain 
\bea
P[\bos{\flux}]= N^{|X_N|} T(\bos{\pi}) \prod_{x} \rho_x \; \exp \left\{ -  I[\bos{\flux}] + \delta I [\bos{\flux}] \right\}
\eea
where the correction to the rate function is given by
\bea
\delta I[\bos{\flux}] = N^{-1/2} \tr \left\{\Delta_{x_N} [\bos{\pi}^\rho] ^{-1}\Delta_{x_N} [\delta\bos{\flux}] \right\}.
\eea
Hence the theory allows to obtain a simple expression for the deviations of order $N^{-1/2}$ to the large deviation rate function, when considering a  reasonably probable fluctuation of size $N^{-1/2}$. The correction might be simple to compute and contract for the macroscopic currents in specific systems such as exclusion processes and the zero-range process. 

\section{A ghost field theory for the fluxes}

\label{fermionic}

\subsection{Ghosts}

While the dominant behavior at large times is entropic, it is tempting to associate some form of energy to finite-time corrections by incorporating the determinant $T_{x_N} = \det \Delta_{x_N}$ into an overall Lagrangian.

A standard procedure to exponentiate determinants is to introduce anticommuting Grassmann variables $\bos{\psi} = (\psi_x)_{x}$ and $\bar{\bos{\psi}} = (\bar{\psi}_x)_{x}$ at all vertices of the graph. To this purpose we notice that
$\det \Delta_{x_N} = \det \Delta'$, 
where $\Delta'$ is obtained by adding $+1$ in the row and column of $\Delta$ corresponding to $x_N$, $\Delta'_{xy} = \Delta_{xy} + \delta_{x,x_N}\delta_{y,x_N}$, as can be easily seen by expanding with Laplace's formula and keeping into account that $\det \Delta = 0$.
Then
\bea
T_{x_N}  = \det \Delta' =  \int \mathcal{D}(\bos{\psi},\bar{\bos{\psi}}) \; e^{\bar{\bos{\psi}} \cdot \Delta' \bos{\psi}}
\eea
where the integration over the Grassmann variables follows the rules of Berezin's calculus, that we do not discuss here (see Ref. \cite{abdesselam} for a review with application to matrix-tree theorems). The second expression in Eq.\,(\ref{eq:lag}) is then obtained by identifying the Lagrangian\footnote{An alternative, and possibly more elegant expression, is to write the theory as an expectation value \cite{abdesselam}
$ P= \int \mathcal{D}(\bos{\psi},\bar{\bos{\psi}}) \bar{\psi}_{x_N} \psi_{x_N} \, e^{\mathcal{L}_0}$ with respect to the Lagrangian  $\mathcal{L}_0(\bar{\bos{\psi}},\bos{\psi},\bos{\flux}) = \bar{\bos{\psi}} \cdot \Delta\prl \bos{\flux} \prr \bos{\psi} - I\prl \bos{\flux} \prr$}
\bea
\mathcal{L}(\bar{\bos{\psi}},\bos{\psi},\bos{\flux}) = \bar{\bos{\psi}} \cdot \Delta\prl \bos{\flux} \prr \bos{\psi} + \bar{\psi}_{x_n}\psi_{x_n} - I\prl \bos{\flux} \prr
\eea

At finite times there is a competition between an entropic term and a kinetic term of a discretized {\it ghost field} of the kind introduced by Faddeev and Popov in Yang-Mills quantum theories.

\subsection{Metric and gauge potential}

The analogy to ghost fields can be made stronger. The scope of this section is to recast this expression as
\bea
\bar{\bos{\psi}} \cdot \Delta \bos{\psi} = (\partial \bar{\bos{\psi}},D \bos{\psi})_\Phi \label{eq:gaugemetric} 
\eea
with $\partial$ an appropriate divergence, $D = \partial + A$ a covariant derivative, $A$ a gauge potential, $\Phi$ a metric.

First, the proper concept of divergence on graphs is defined on the oriented graph $G_{\prec} = (X,E_\prec)$,  where we introduce an arbitrary order relation between states $x \prec y$ which establishes an orientation of each edge, that we also denote $x \prec y$. We then introduce the incidence matrix $\partial$ with entries
\bea
\partial_{y \prec z,x} := \left\{ \ba{ll} +1, & \mathrm{if}~ y = x \\ -1, & \mathrm{if}~z = x \\ 0 & \mathrm{otherwise} \ea \right. , 
\eea
that fully describes the topology of the graph. Indeed, the incidence matrix is the discrete version of the divergence, when acting on scalar fields (defined on the vertex set) on the right, and of the gradient, when acting on vector fields (see Ref.\,\cite{ddg} for elements of discrete differential geometry).

The latter vector fields are defined as antisymmetric edge observables.  Hence we move from the fluxes to the activities and currents introduced in Eq.\,(\ref{eq:actcur}), and let $\Delta' = \Delta^+ + \Delta^-$, where $\Delta^{\pm}_{x,y} :=  \delta_{x,y} \sum_z \flux^\pm_{zx} - \flux^\pm_{xy}$. Finally we define the positive definite {\it metric} $\Phi := \mathrm{diag}(\flux^+_{y \prec z})_{y \prec z}$ and the {\it gauge potential} as the $|E|\times |X|$ matrix
\bea
A_{y\prec z,x} =\left\{ \ba{ll} \flux^-_{yz}/\flux^+_{yz}, & \mathrm{if}~y = x ~\mathrm{or}~z = x \\ 0 & \mathrm{otherwise} \ea \right. . 
\eea
It is straightforward that $\Delta^+ = \partial^T \Phi \partial$, where $^T$ denotes matrix transposition. Also, $\Delta^- = \partial^T \Phi A$, since
\bea
(\partial^T \Phi A)_{x,y} = \left\{ \ba{ll} \flux_{xy}^- & \mathrm{if}~ x \neq y \\ 
\sum_{z} \flux_{yz}^- - \sum_{z} \flux_{zy}^- & \mathrm{if}~ x = y 
 \ea \right. .
\eea
We also notice that, introducing the matrix $|\partial|$ identical to $\partial$ but with all $-1$'s turned into $+1$'s (a sort of unoriented incidence matrix) we have $A = A' |\partial|$, where $A' = \mathrm{diag}\,(A_{x\prec y,y})_{x\prec y}$. This clearly expresses the fact that the covariant derivative is not a derivative\footnote{From a discrete-algebraic perspective \cite{ddg}, derivatives are boundary operators. While oriented graphs are indeed cellular complexes, i.e. a topological space with proper boundaries and boundary relationships such that ``boundaries have no boundary'', unoriented ones are not, since their boundaries might have boundary.}. Furthermore, diagonal entries of $A'$ provide an appropriate notion of ``gauge potential'' since, being defined edge-wise and being antisymmetric, they identify a discrete vector field.

Then, we can indeed put the ghost Lagrangian in the form of Eq.\,(\ref{eq:gaugemetric}). The metric, establishing a vicinity between states, is mediated by the activity, i.e. the total number of transitions that occurred between two states.

\subsection{Reflections on the gauge potential}

In Ref.\,\cite{polettini1} the Author proposed a gauge theory of Markov processes, where the underlying symmetry is the invariance of thermodynamic observables under a change of reference prior, granting the observer-independence of thermodynamics based on information theory. There, the gauge potential is a non-fluctuating quantity defined by $\ln (\pi_{xy}/\pi_{yx})$ modulo gauge transformations.  When transition probabilities have a physical origin (e.g. they obey the Arrhenius law, the mass-action law, the Kubo-Martin-Schwinger condition etc.) the Wilson loops of this gauge potential yield Clausius's expression for the entropy produced along a cyclic transformation. The candidate for a fluctuating analogue of this gauge potential would then be $\ln (\flux_{xy}/\flux_{yx})$, which is obviously different from the entries of matrix $A$ above. However, if we consider small fluxes $\flux^-_{xy} = \epsilon^-_{xy}$ we have $\flux_{xy} = \flux^+_{xy} +  \epsilon^-_{xy}$ and
\bea
\ln \frac{\flux_{xy}}{\flux_{yx}} 
= \ln \left( 1 + \frac{2\epsilon^-_{xy}}{\fluxr^+_{xy} +  \epsilon^-_{xy}} \right) \approx 2 \,\frac{\flux^-_{xy}}{\flux^+_{xy}} + O(\epsilon^2). \label{eq:loclin}
\eea
Hence when we consider small deviations from a vanishing configuration of currents we do obtain that the vector potential arising from the covariant derivative in the ghost field theory is the fluctuating version of the gauge field previously introduced in Ref.\,\cite{polettini1}.

It must be noticed that the fact that this identification is approximate might be an artifact of the discrete case, and that this discrepancy might fade away for continuous-state space diffusion processes. Indeed, the large deviation rate function for diffusion processes with empirical measure $\mu$ and current $j$ is given by \cite{maes2,wynants}
\bea
I(\mu,j) = \frac{1}{4} \int (j - j_\mu) (\mu D)^{-1}  (j - j_\mu) dx
\eea
on the assumption $\nabla j = 0$, where $j_\mu = F \mu - D \nabla \mu$,  $D$ being the diffusion matrix and $F$ the drift vector. Notice that in this expression again a fluctuating symmetric property $\mu D$, with the meaning of dynamical activity, plays the role of the metric.  The gauge theory for diffusion processes was analyzed in Ref.\,\cite{polettini4}, and there indeed the vector potential is locally linear in the currents as in Eq.\,(\ref{eq:loclin}), and not a logarithmic expression; its candidate fluctuating counterpart would then be $A = (\mu D)^{-1} j$. Therefore, generalization of our results to diffusions in continuous space, e.g. along the lines initiated in Ref.\,\cite{cooper}, might indeed lead to an elegant and exact gauge theory for finite-time fluctuations. A rigorous treatment of the small-step limit from discrete to continuous diffusion is postponed to future inquiry. Techniques based on fermionic fields, similar to the ones employed in this paper, for Fokker-Planck equations have been employed in Ref.\,\cite{supersym}.

\section{General Markov jump processes}

\label{continuous}

In this section we generalize the results of Sec.\,\ref{restless} to Markov jump processes. As we will see, application of the BEST theorem leads to no significant variation on the theme, while some care has to be paid in handling the waiting times spent at states, which enter the game in a crucial way. Below we consider two strategies to deal with them: moving to the Laplace representation (as in Refs.\,\cite{lecomte2,flomenbom}, among others), which is quite straightforward but requires an awkward (still, feasible) inversion, or considering the joint probability of the fluxes and of the so-called empirical measure. Again, we re-obtain known results from large deviation theory, putting some emphasis on certain subtle aspects related to the statistics of total number of jumps performed up to some time, that we feel have been overlooked.

\subsection{Jump processes in Laplace space}

Let $U(t) = \exp[t(W-\Omega)]$ be the propagator of a stationary ergodic continuous-time Markov semigroup on a finite state space. Entries of the off-diagonal part $W$ of the generator are the positive rates $w_{xy}$ at which transitions $x \gets y$ occur, with units of inverse time. Entries $\omega_{y} = \sum_x w_{xy}$  of the diagonal part $\Omega$ are exit rates out of states. We will work in Laplace representation, assuming it exists (the connection between semigroups, the resolvent formalism and the Laplace transform is ruled by the Hille-Yosida theorem, see \cite{hille} for a simple introduction). For all $\omega >0$, 
\bea
U(\omega) = \int e^{-t\omega I} U(t) dt  = (\omega I + \Omega - W)^{-1},
\eea
where $I$ is the identity. It can be proven that $\|W\|< \|\omega I + \Omega\|$ with respect to the usual operator norm, under which condition the following operator geometric series can be produced
\bea
(\Omega - W)^{-1} = \Omega^{-1} \sum_{N=0}^{+\infty} (W \Omega^{-1})^N. 
\eea
Making the matrix products explicit, Laplace transition amplitudes can be expressed as sums over paths
\bea
U(\omega)_{x_t,x_0} =  \sum_N \sum_{\bos{x}} P_N\prl \bos{x} \prr(\omega)
\eea
with path weight
\bea
P_N\prl \bos{x} \prr(\omega) = \frac{1}{\omega + \omega_{N}} \prod_{i < N} \frac{w_{i+1,i}}{\omega + \omega_{i}}, \label{eq:path}
\eea
where path $\bos{x} = (x_i)$ performs $N$ jumps between the constrained end-points $x_0$ and $x_N = x_t$. For notational simplicity, wherever possible we replace state $x_i$ by $i$.

The analysis then follows as for restless Markov chains. We introduce the fluxes $\bos{\flux}$ as in Eq.\,(\ref{eq:fluxes}) to obtain
\bea
P_N\prl \bos{\flux} \prr(\omega) = \frac{\varepsilon\prl \bos{\flux} \prr}{\omega + \omega_{N}} \prod_{x,y} \left(\frac{w_{xy}}{\omega + w_y}\right)^{\! \flux_{xy}} \approx \frac{\det \Delta_{x_N}\prl \bos{\flux} \prr}{\omega + \omega_{N}} \; e^{-I_N\prl \bos{\flux} \prr(\omega)} \label{eq:Pflu}
\eea
where
\bea
I_N\prl \bos{\flux} \prr(\omega) := \sum_{x,y} \flux_{xy} \ln \frac{\flux_{xy}(\omega_y + \omega)}{w_{xy} \flux_y}.
\eea
An important note is here in order. The probability of the fluxes Eq.\,(\ref{eq:Pflu}) is labelled by the total number $N$ of jumps to account for the fact that the possible values that these fluxes can take depends on $N$, since $\sum_{x,y}\flux_{xy} = N$. In mathematical terms, the filtration of the process (the time sequence of the possible events) is itself a stochastic process regulated by the probability of $N$ jumps up to time $t$. We will return on this issue later.

While Eq.\,(\ref{eq:Pflu}) is in Laplace parameter, it can already be employed in practical experiments. In fact, if instead of halting the  Markov process at a given time $t$ we conduct experiments in variable time sampled with probability $P(t) = \omega e^{- \omega t}$, then the probability distribution of the fluxes for such protocol is given by $P_{N,\omega} \prl \bos{\flux} \prr = \omega \, P_N\prl \bos{\flux} \prr(\omega)$. In this case, since typical trajectories halt at time $\omega^{-1}$ and prolonging this time makes the probability smaller, large fluctuations are very rare and it is necessary to retain the full structure of the probability density function.

\subsection{All-time probability of the fluxes}

If instead we insist on performing experiments in a given time, we need to perform the inverse Laplace transform. Isolating the explicit dependency on $\omega$ we have
\bea
P_N\prl \bos{\flux} \prr(t) =  \det \Delta_{x_t}\prl \bos{\flux} \prr \; e^{-I_N\prl \bos{\flux} \prr(0)}\mathcal{L}^{-1} \left[ \prod_x \frac{1}{(1 + \omega/\omega_x)^{\flux_x}}\right](t). 
\eea
The inverse Laplace transform to be calculated is given by
\bea
\mathcal{L}^{-1} \left[ \prod_x \frac{1}{(\omega_x + \omega)^{\flux_x}}\right](t) = \sum_x Q_x(t) e^{-\omega_{x} t} 
\eea
where $Q_x(t)$ is a polynomial of degree $\flux_x -1$ in $t$. The general idea is that $\prod_x (\omega_{x} + \omega)^{\flux_x}$ is the common denominator for denominators $(\omega_{x} + \omega)^{m_x}$, for all $n_x = 0,\ldots,\flux_x-1$. If there exist coefficients $C_x^{n_x}$ such that 
\bea
\prod_x \frac{1}{(\omega_{x} + \omega)^{\flux_x}} = \sum_x \sum_{n_x =1}^{\flux_x} C_x^{n_x-1} \frac{1}{(\omega_{x} + \omega)^{m_x}},
\eea
given that the Laplace transform of these latter terms is known,
\bea
\mathcal{L}^{-1} \left[ \frac{1}{(\omega_{x} + \omega)^{n_x}} \right](t) = \frac{t^{n_x-1}}{(n_x-1)!} e^{- \omega_{x} t}
\eea
we obtain
\bea
Q_x(t) = \sum_{m_x =0}^{\flux_x-1} C_x^{n_x} \frac{t^{n_x}}{n_x!}.
\eea
The task is then to find the coefficients $C_x^{m_x}$. We derive the following formula in \ref{appB},
\bea
C_x^{n_x} & = & (-1)^{\flux_x -1 - n_x} \hspace{-0.4cm}
 \sum_{\substack{\{m_y\}\\  \sum_{y\neq x} m_y \\ ~~= \flux_x - n_x - 1}}  \hspace{-0.4cm} \prod_{y \neq x}  \frac{1}{( \omega_{y} - \omega_x)^{\flux_y + m_{y}}} \hspace{-0.4cm}  \sum_{\substack{\{m_{q}\}\\ \sum_{q = 1}^{\flux_x -1} q m_{q} \\ ~~~= m_y}}  \hspace{-0.4cm} \prod_{q = 1}^{\flux_x -1} \frac{(\flux_{y}/q)^{m_{q}}}{m_{q}!}\nonumber \\
\eea
Though not very practical, this formula is programable in software simulations. 

\subsection{Joint fluxes and empirical measure}

The empirical measure accounts for the resting time at states. In this section we derive an exact formula for the joint statistics of the fluxes and of the empirical measure at time $t$, for a given number of jumps $N$.

Before jump $x_i \to x_{i+1}$, the process waits an interval of time $t_i$ at state $x_i$, for a total elapsed time $\sum_i t_i = t$. We then introduce the empirical measure as the fraction of time spent at $x$:
\bea
\tau_x[\bos{x},\bos{t}]  := \sum_{i} t_i  \,\delta_{i,x} .
\eea
Notice that $\sum_x \tau_x[\bos{x},\bos{t}]  = \sum_{i} t_i  = t$.  We can then express Eq.\,(\ref{eq:path}) as
\bea
P_N\prl \bos{x} \prr(\omega) & = & \int \mathcal{D} \bos{t} \; e^{-(\omega + \omega_N) \tau_N } \prod_{i < N}  w_{i+1,i} \, e^{-(\omega_{i} + \omega)t_i} \\
& = &  \int dt \, e^{- \omega t}  \int \mathcal{D} \bos{t} \, P_N\prl \bos{x},\bos{t} \prr(t)
\label{eq:lala}
\eea
where all integrals range from $0$ to $+\infty$ and 
\bea
P_N\prl \bos{x},\bos{t} \prr(t) & := & \delta\Big(t - \sum_i  t_i \Big) e^{-\omega_N t_N} \prod_{i < N}  w_{i+1,i} \, e^{-\omega_{i} t_i} \\
& = & \delta\Big(t - \sum_x  \tau_x[\bos{x},\bos{t}] \Big) e^{- \sum_x \omega_{x} \tau_x[\bos{t},\bos{x}]} \prod_{x,y} w_{xy}^{\flux_{xy}[\bos{x}]}.
\eea
is the probability that up to time $t$, given $N$ jumps, trajectory $\bos{x},\bos{t}$ has been performed. We now want to marginalize for $\bos{\flux},\bos{\tau}$. To marginalize out the waiting times, noticing that the above expression only depends on $\bos{\tau}[\bos{x},\bos{t}]$, we need to calculate
\bea
\int \mathcal{D}\bos{t} \, \prod_x \delta_{\tau_x[\bos{x},\bos{t}],\tau_x}
= \prod_x \frac{\tau_x^{\flux_x}}{\flux_x!}.
\eea
Notice that for large enough $\bos{\flux}$ the factorial $\flux_x!$ is counterbalanced by that coming from the Eulerian tour counting. We then obtain, using Stirling, 
\bea
P_N\prl \bos{\flux},\bos{\tau} \prr(t) 
 =  \delta\Big(t - \sum_x  \tau_x\Big)  \det \Delta_{x_N} \exp - I_N\prl \bos{\flux},\bos{\tau} \prr \label{eq:N}
\eea
where
\bea
I_N\prl \bos{\flux},\bos{\tau} \prr := \sum_{x,y} \flux_{xy} \ln \frac{\flux_{xy}}{w_{xy}\tau_y} + \sum_x \tau_x \, \omega_{x}. \label{eq:RFMP}
\eea

\subsection{Large deviations of the fluxes}

\label{largecont}

When considering large deviations one requires that the fluxes and the empirical measure grow linearly in time
\bes
\bos{\flux} & \sim & t \dot{\bos{\flux}} \\
\bos{\tau} & \sim & t \dot{\bos{\tau}},
\ees
where the dot serves as a notation. The joint rate function for the fluxes and the empirical measure has been discussed in Ref.\,\cite{maes2} and, interestingly, it can be further contracted for the joint currents and empirical measure  \cite{wynants,bertini2}. As a small improvement we can work out an expression for the large deviation rate function of the fluxes. We define
\bea
\dot{I}_N(\cdot) := t^{-1}I_N(t\,\cdot).
\eea
Notice that we retain the explicit dependence on $N$ for reasons that will be explained in the following section. Let us take the variation of $\dot{I}_N$ with respect to $\dot{\bos{\tau}}$, accounting for the normalization $\sum_x \dot{\tau}_x=1$ by means of a Lagrange multiplier $\lambda$. Setting $\delta/ \delta \dot{\tau}_x \left[\dot{I}_N + \lambda (\sum_{y} \dot{\tau}_y -1) \right] = 0$  yields $\dot{\tau}_x^\ast = \dot{\flux}_x/(\dot{\omega}_x + \lambda)$, where the multiplier satisfies
\bea
\sum_{x}  \frac{\dot{\flux}_x}{\omega_x + \lambda[\dot{\bos{\flux}}]} = 1. \label{eq:determined}
\eea
We then obtain
\bea
\dot{I}_N\prl \dot{\bos{\flux}} \prr = \sum_{x,y} \dot{\flux}_{xy} \ln \frac{\dot{\flux}_{xy}(\omega_y +  \lambda[\dot{\bos{\flux}}])}{w_{xy} \dot{\flux}_y} + \sum_x   \frac{\dot{\flux}_x \omega_x}{\omega_x +  \lambda[\dot{\bos{\flux}}]}.
\eea
This, together with the constraint Eq.\,(\ref{eq:determined}), provides an expression for the large deviation rate function for the fluxes, with an important {\it caveat} that we next discuss.

\subsection{Total number of jumps and large deviations}

Derivations of the large deviation rate function for the fluxes and the empirical measure often employ the smart trick of comparing the probability of a fluctuation with that of a system where that configuration is typical. The rate function found by the above mentioned authors is (see Eq.\,(11.7) in Ref.\,\cite{wynants}):
\bea
J(\dot{\bos{\flux}},\dot{\bos{\tau}}) & = &  \sum_{x,y} \dot{\flux}_{xy} \ln \frac{\dot{\flux}_{xy}}{w_{xy} \dot{\tau}_y} + \sum_x \dot{\tau}_x \, \omega_{x} - \sum_{x,y} \dot{\flux}_{x,y} \\
& = & \dot{I}_N(\dot{\bos{\flux}},\dot{\bos{\tau}}) - \frac{N}{t} \label{eq:constant}
\eea
The second line identifies $J$ with our expression but for an additional term  $- N/t$. From our perspective, the origin of this term is subtle and we haven't seen it treated elsewhere. In the following we attempt a discussion.

First we notice that there is no such issue in the case of restless Markov chains, because $ N=t$ with certainty and additive constants to the rate function are irrelevant. For jump processes, the total number of jumps $N$ is a stochastic variable itself and it obeys a large deviation principle. It might be that at some time $t$ a number $N \propto t$ of jumps have occurred, but (since we are interested in large deviations!) we need to keep into account that also $N$ might fluctuate. In other words, the two extensive parameters $N$ and $t$ are decoupled and one needs to specify how one scales with the other. Importantly, while $\bos{\phi}$ determines $N$, its statistics depends on $N$ in a subtle way. In fact, a Markov jump process is built by superimposing  Poisson processes on top of a Markov chain \cite{ethier}, so that the sequence of measurable values that the fluxes can take (the so-called {\it filtration} of the process) is itself a stochastic process.

The simpler way out of this riddle is to just suppose that $N = \omega^\ast t$ independently attains the minimum of its large deviation function, that is, that we look at the most probable fluxes assuming that the total number of jumps up to some time is the most probable one. Then in Eq.\,(\ref{eq:constant}) we obtain an additional constant that does not affect the large deviations. {\it A posteriori}, it is intuitive that the most probable configuration of fluxes should be such that, of all the jumps out of $y$, the ratio of those that go to $x$ should equal to the probability that the next jump out of $y$ is to $x$:
\bea
\frac{\dot{\flux}_{xy}^\ast}{\dot{\flux}_{y}^\ast} = \frac{w_{xy}}{\omega_y}.
\eea
Notice that $w_{xy}/\omega_y$ are transition probabilities of a restless Markov chain. Then $\dot{\flux}_x^\ast  = \omega^\ast \rho_x$, $\bos{\rho}$ being the steady state of the restless Markov chain. One can easily check that this is the stationary solution corresponds to  $\lambda = 0$ and
\bea
\omega^\ast = \left( \sum_x \frac{\rho_x}{\omega_x} \right)^{-1}.
\eea
As a consistency check, notice that these are indeed the exact expressions when one takes all identical exit rates $\omega_x = \omega^\ast$. Therefore, we do obtain the usual rate function as in Refs.\,\cite{maes2,wynants,bertini2}, when the total number of jumps is assumed to attain its most probable value. However, the assumption is quite a strong one, as it has been shown that there exist kinetically constrained models where the total number of jumps undergoes a dynamical phase transition between an active phase with $N = \omega^\ast t$ and an inactive phase with $N$ subextensive in time \cite{nemoto2,bodineau}.

We hold that a more complete and correct treatment should instead keep into account that the total number of jumps fluctuates, and that this might imply modifications of the formula usually found in the literature. The large deviation rate function of the total number of jumps has been the subject of a prolific branch of studies \cite{nemoto2,bodineau}. For the case of all identical exit rates $\omega_x = \omega^\ast$, the probability of $N$ total jumps is Poissonian with average $\omega^\ast t$
\bea
P\prl N \prr(t)  = e^{-\omega^\ast t}\frac{(\omega^\ast t )^N}{N!}
\eea
with rate function
\bea
\dot{I}\prl\dot{N} \prr = \omega^\ast + \dot{N} \ln \frac{\dot{N}}{\omega^\ast}.
\eea
This piece of information should somehow feed back into the rate function of the fluxes if the general question to be asked is what is the probability of a configuration of fluxes, regardless of the total number of jumps. We demand this task to future inquiry. Departure from Poissonian statistics in a low-density fluid, treated with linearized Boltzmann equation, is found in Ref.\,\cite{visco}. Analogies and differences between large $N$ and large $t$ ensembles have been studied in Ref.\,\cite{budini}.

\section{Conclusions}

\label{conclusions}

In this paper we established a fascinating connection between Large Deviation Theory, combinatorics, and certain techniques and concepts from the Quantum Field Theory of Yang-Mills theories. The main result is a generalization of known results \cite{maes2,wynants,bertini2} regarding the large deviation rate function of the fluxes and of the empirical measure for Markov processes  (level 2.5 large deviations) to the full probability distribution. The result relies on a slight generalization of the so-called BEST theorem in graph combinatorics to Eulerian tours with open ends. On the other hand, since Eulerian tours can be counted by spanning trees, recently proposed methods to express matrix-tree determinants as Fermionic field theories are applicable. The Lagrangian so found displays an interaction of the ghost field with an underlying metric, representing the activity of the system, and with a gauge potential that is related to the nonequilibrium gauge potential that the Author explored in a previous publication \cite{polettini4}.

The mathematics of spanning trees is very rich and full of connections, e.g. to knot theory via the Alexander's polynomial \cite{enumeration}, to spin networks in Loop Quantum Gravity, to Feynman diagrammatics. A simple open question that already emerges is whether our slight generalization of the BEST theorem to quasi-balanced digraphs might generalize to {\it any} digraph, by a suitable decomposition in terms of several Eulerian tours with open ends. Finally, establishing the statistics of the currents at all times might play an important role in the study of finite-time thermodynamics.

An open problem that will deserve further attention is that of the apparent dependence of the large deviation function, as we derived it, on a given total number of jumps and, more in general, a more careful characterization of the statistics of the total number of jumps, which is affected by the stochastic trajectory and in turn determines the set of measurable events.

\section*{Acknowledgments}

The research was supported by the National Research Fund of Luxembourg in the frame of Project No. FNR/A11/02 and the AFR Postdoc Grant 5856127.

\appendix 

\section{BEST theorem for quasi-balanced digraphs}

\label{appA}

Let $(X,\bos{\flux})$ be a connected balanced digraph with vertex set $X$. Let $\flux_{xy}$ be the number of arrows $y \to x$ and $\flux^{\mathrm{\,out}}_x = \sum_y \flux_{yx}$ be the out-degree of vertex $x$. 

\begin{definition}
A \emph{quasibalanced digraph} is a digraph such that one vertex $x_{\mathrm{init}}$ has one more outgoing arrow than ingoing ones, one vertex $x_{\mathrm{fin}}$ has one more incoming arrow than outgoing ones, and all other vertices are balanced.
\end{definition}

\begin{definition}
An \emph{open Eulerian tour} on a quasibalanced digraph is a path that starts at $x_{\mathrm{init}}$, ends at $x_{\mathrm{fin}}$ and traverses each arrow exactly once.
\end{definition}

\begin{theorem}
A connected quasibalanced digraph admits an open Eulerian tour.  
\end{theorem}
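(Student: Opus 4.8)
The plan is to reduce the claim to Euler's classical theorem on balanced digraphs by the standard ``arrow completion'' trick. First I would form an auxiliary digraph $(X,\bos{\flux}')$ from $(X,\bos{\flux})$ by adjoining a single extra arrow $a$ directed from $x_{\mathrm{fin}}$ to $x_{\mathrm{init}}$; concretely, $\flux'_{xy} = \flux_{xy} + \delta_{x,x_{\mathrm{init}}}\,\delta_{y,x_{\mathrm{fin}}}$. A one-line degree count using the defining property of a quasibalanced digraph shows that in $(X,\bos{\flux}')$ every vertex is now balanced: the vertex $x_{\mathrm{init}}$ gains exactly one incoming arrow (so its excess outgoing arrow is cancelled), the vertex $x_{\mathrm{fin}}$ gains exactly one outgoing arrow (so its excess incoming arrow is cancelled), and every other vertex is untouched. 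Since adjoining an edge cannot disconnect a graph, $(X,\bos{\flux}')$ remains connected.

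Next I would invoke Euler's theorem in the directed form: a weakly connected digraph all of whose vertices are balanced admits an Eulerian circuit, i.e.\ a closed walk traversing each arrow exactly once (this is the directed analogue of Euler's $1736$ result; see e.g.\ \cite{stanley}, \S 5.6). Its proof is the familiar greedy--splicing argument: grow a maximal trail, observe that the balance condition forces it to return to its starting vertex, and, if it is not yet Eulerian, use connectivity to find a vertex of the trail incident to an unused arrow, grow a further closed trail there, and splice it in; iterating exhausts all arrows. Applying this to $(X,\bos{\flux}')$ produces an Eulerian circuit $C$.

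Finally, since $C$ is a closed walk it may be cyclically rotated so that its first traversed arrow is the auxiliary arrow $a$. Deleting this initial step leaves a walk that starts at the head of $a$, namely $x_{\mathrm{init}}$, ends at the tail of $a$, namely $x_{\mathrm{fin}}$, and traverses each of the original arrows of $(X,\bos{\flux})$ exactly once. This is precisely an open Eulerian tour in the sense of the preceding definition, which establishes the theorem. (The converse implication, that any digraph carrying an open Eulerian tour must be quasibalanced, is immediate from Eq.\,(\ref{eq:inout}) and need not be belabored.)

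I expect essentially no obstacle beyond bookkeeping: the genuine mathematical content, the existence of an Eulerian circuit on a connected balanced digraph, is classical and is taken as known here, so the work is confined to the reduction. The points that deserve a moment's care are (i) that $x_{\mathrm{init}} \neq x_{\mathrm{fin}}$, so that the completion step is well posed---this is forced by the definition, since one vertex satisfies $\flux^{\mathrm{\,out}} - \flux^{\mathrm{in}} = +1$ and the other $\flux^{\mathrm{\,out}} - \flux^{\mathrm{in}} = -1$, which cannot hold at the same vertex---and (ii) that the cyclic rotation genuinely places $a$ in the first position, which is immediate because $C$ uses $a$ exactly once. Both are trivial, so the real role of this appendix is simply to record the reduction cleanly for use in counting open Eulerian tours in the main text.
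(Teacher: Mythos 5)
Your proposal is correct and follows essentially the same route as the paper's own proof: adjoin the arrow $x_{\mathrm{init}} \gets x_{\mathrm{fin}}$ to obtain a connected balanced digraph, invoke Euler's theorem to get a closed Eulerian tour, cyclically rotate so the auxiliary arrow sits at one end, and delete it. The only (immaterial) difference is that you place the auxiliary arrow first rather than last before removing it, and you spell out the degree count and the fact that $x_{\mathrm{init}} \neq x_{\mathrm{fin}}$, which the paper leaves implicit.
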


\begin{proof}
By adding an arrow $x_{\mathrm{init}} \gets x_{\mathrm{fin}}$ one obtains a connected balanced digraph, which admits a closed Eulerian tour starting and ending at $x_{\mathrm{init}}$. By cyclically permuting the arrows in this tour one can obtain a Eulerian tour whose last arrow is $x_{\mathrm{init}} \gets x_{\mathrm{fin}}$. By removing this arrow one obtains an open Eulerian tour on the quasibalanced digraph.
\end{proof}

\begin{theorem}
\label{th2}
The number of open Eulerian tours on a connected quasibalanced digraph is given by
\bea
E = T_{x_{\mathrm{fin}}} \prod_x \left(\flux^{\mathrm{\,out}}_x - 1 + \delta_{x,x_{\mathrm{fin}}}\right)!
\eea
where $T_{x_{\mathrm{fin}}}$ is the number of oriented spanning trees with root in $x_{\mathrm{fin}}$.
\end{theorem}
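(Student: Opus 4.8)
The plan is to reduce Theorem~\ref{th2} to the classical BEST theorem for balanced digraphs by the same ``add an arrow'' device used in the proof of the existence statement just above, and then to carefully track how the count changes under the addition and removal of that extra arrow. First I would form the balanced digraph $(X,\bos{\flux}^+)$ obtained from $(X,\bos{\flux})$ by adjoining a single new arrow $e_0 \colon x_{\mathrm{fin}} \to x_{\mathrm{init}}$, so that every vertex becomes balanced, with the out-degrees changing only at $x_{\mathrm{fin}}$, where $\flux^{\mathrm{\,out},+}_{x_{\mathrm{fin}}} = \flux^{\mathrm{\,out}}_{x_{\mathrm{fin}}} + 1$. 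The BEST theorem (as in \cite{stanley}, \S 5.6) then says that the number of closed Eulerian tours of $(X,\bos{\flux}^+)$, counted up to cyclic rotation, equals $T^+_{v}\prod_x (\flux^{\mathrm{\,out},+}_x - 1)!$ for any fixed root $v$, where $T^+_v$ is the number of arborescences of $(X,\bos{\flux}^+)$ rooted at $v$.

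Next I would set up the bijection between open Eulerian tours of $(X,\bos{\flux})$ and a marked class of closed Eulerian tours of $(X,\bos{\flux}^+)$. Given an open Eulerian tour from $x_{\mathrm{init}}$ to $x_{\mathrm{fin}}$, appending the arrow $e_0$ produces a closed Eulerian tour of $(X,\bos{\flux}^+)$ in which $e_0$ is traversed; conversely, any closed Eulerian tour of $(X,\bos{\flux}^+)$ can be cut open immediately after its unique traversal of $e_0$, yielding an open Eulerian tour of $(X,\bos{\flux})$ from $x_{\mathrm{init}}$ to $x_{\mathrm{fin}}$. This gives an exact bijection between open Eulerian tours of $(X,\bos{\flux})$ and closed Eulerian tours of $(X,\bos{\flux}^+)$ with a distinguished starting point, namely the one right after $e_0$; equivalently, it identifies $E$ with the number of closed Eulerian tours of $(X,\bos{\flux}^+)$ counted \emph{without} quotienting by cyclic rotation but with the traversal of $e_0$ fixed as the ``last step.'' Since each cyclic-equivalence class of closed Eulerian tours of $(X,\bos{\flux}^+)$ contains exactly $\flux^{\mathrm{\,out},+}_{x_{\mathrm{fin}}}$ representatives whose final arrow is an arrow out of $x_{\mathrm{fin}}$, and exactly one of those is $e_0$, the marked count equals the BEST count with root taken at $x_{\mathrm{fin}}$: $E = T^+_{x_{\mathrm{fin}}}\prod_x(\flux^{\mathrm{\,out},+}_x - 1)!$.

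Finally I would simplify both factors back to quantities of the original digraph. For the product term, $\flux^{\mathrm{\,out},+}_x - 1 = \flux^{\mathrm{\,out}}_x - 1 + \delta_{x,x_{\mathrm{fin}}}$, which reproduces the factorials in the statement. For the arborescence count, an arborescence of $(X,\bos{\flux}^+)$ rooted at $x_{\mathrm{fin}}$ cannot use the arrow $e_0$ (which points \emph{into} $x_{\mathrm{init}}$, i.e. away from the root in the wrong sense — more precisely, a spanning arborescence rooted at $x_{\mathrm{fin}}$ has all edges directed toward $x_{\mathrm{fin}}$, so it may include an arrow out of $x_{\mathrm{fin}}$ only if that creates a cycle, which is forbidden), so $T^+_{x_{\mathrm{fin}}} = T_{x_{\mathrm{fin}}}$, the arborescence count of $(X,\bos{\flux})$ rooted at $x_{\mathrm{fin}}$. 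Combining the three identifications yields $E = T_{x_{\mathrm{fin}}}\prod_x(\flux^{\mathrm{\,out}}_x - 1 + \delta_{x,x_{\mathrm{fin}}})!$, as claimed.

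I expect the main obstacle to be the bookkeeping in the second step: making precise exactly which normalization of ``number of Eulerian tours'' the BEST theorem delivers (tours up to rotation, with a fixed last edge, with a fixed root vertex) and checking that the cut-and-paste bijection matches that normalization without an overcounting or undercounting factor. In particular one must verify that fixing $e_0$ as the terminal arrow picks out exactly one representative per cyclic class, which hinges on the fact that $e_0$ appears exactly once in any Eulerian tour of $(X,\bos{\flux}^+)$; this is where the argument could silently go wrong by a factor of $\flux^{\mathrm{\,out}}_{x_{\mathrm{fin}}}+1$ if one is careless. The argument that $e_0$ cannot belong to an $x_{\mathrm{fin}}$-rooted arborescence is routine but should be stated cleanly, since it is the only place the specific choice of root among the two special vertices matters.
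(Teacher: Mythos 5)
Your argument is correct, but it takes a genuinely different route from the paper's. You reduce the open-tour count to the classical (closed-tour) BEST theorem by adjoining the auxiliary arrow $e_0\colon x_{\mathrm{fin}}\to x_{\mathrm{init}}$ and then undoing the surgery: the append/delete-$e_0$ map is a bijection between open tours of $(X,\bos{\flux})$ and closed tours of the balanced digraph having $e_0$ as their final arrow; since $e_0$ occurs exactly once in any such tour, this picks exactly one representative per cyclic class, which is precisely the normalization the BEST theorem delivers (most cleanly seen by fixing $e_0$ itself as the distinguished first edge, so that Stanley's formula hands you the root $\mathrm{init}(e_0)=x_{\mathrm{fin}}$ directly, with no appeal to root-independence of the arborescence count). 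The two correction factors then work out as you say: only the out-degree of $x_{\mathrm{fin}}$ is raised by one, and $e_0$, being an out-arrow of the root, can never belong to an $x_{\mathrm{fin}}$-rooted arborescence, whence $T^{+}_{x_{\mathrm{fin}}}=T_{x_{\mathrm{fin}}}$. The paper instead re-runs Stanley's proof directly in the open setting: it shows that the last-exit arrows $e(x)$, $x\neq x_{\mathrm{fin}}$, of an open tour form an arborescence rooted at $x_{\mathrm{fin}}$, and conversely that any such arborescence together with an arbitrary ordering of the remaining out-arrows at each vertex generates a valid open tour, the substantive step being the ``one never gets stuck at $x_{\mathrm{fin}}$ prematurely'' argument. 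Your reduction is shorter and outsources the hard combinatorics to the classical theorem, at the price of the normalization bookkeeping you rightly flag as the danger point (and handle correctly); the paper's version is self-contained and makes the tree--tour correspondence explicit. Both are valid proofs of the stated formula.
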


\begin{proof}
We follow the arguments found in Stanley's book \cite{stanley}, pp. 54-57, for balanced digraphs, with slight modifications. Consider an open Eulerian tour, that by the previous theorem exists. For all vertices $x\neq x_{\mathrm{fin}}$ let $e(x)$ be the last arrow that exits $x$. Then the set $T = \{e(x),x\neq x_{\mathrm{fin}}\}$ is an oriented spanning tree with root in $x_{\mathrm{fin}}$ (see \cite{stanley}, Proof of Claim 1). Conversely, given a spanning tree with root in $x_{\mathrm{fin}}$ one can construct open Eulerian tours starting at $x_{\mathrm{init}}$ and picking random consecutive edges that do not belong to the spanning tree, unless there are no edges left out, in which case one will pick an edge $e(x)\in T$ and exit $x$ for the last time.  We need to show that by this procedure one doesn't get stuck at $x_{\mathrm{fin}}$ before traversing all other arrows, that is, that we do not need to ``lift the pencil''. If that was the case, then it means that at least a simple cycle of arrows $C$ (with no crossings or multiple edges) has been left out of the tour $\gamma$ so far generated (since $E \setminus \gamma$ is a balanced digraph). Since the spanning tree $T$ does not contain cycles, one of these arrows $e \in C$ out of a vertex $y$ is not in the spanning tree, and $e(y)$ does not belong to the cycle. Then either we exited $y$ by $e(y)$ before using $e$, which is in contradiction with the above procedure, or $e(y)$ has not been used, which implies that it belongs to another simple cycle $C'$ that has been left out of $\gamma$. Along this other simple cycle there exists an edge $e' \notin T$, and one can continue the reasoning to obtain a contradiction. Then there are  $\prod_x \left(\flux^{\mathrm{\,out}}_x - 1 + \delta_{x,x_{\mathrm{fin}}}\right)!$ ways to pick such random arrows, with the $-1$ accounting for the edges in the spanning tree, but for vertex $x_{\mathrm{fin}}$. Since every choice of a tree $T$ and every random selection of remaining arrows generates a different Eulerian tour, and every Eulerian tour can be so described, we can conclude.  
\end{proof}

\section{An inverse Laplace transform} 

\label{appB}

The task is to find the inverse Laplace transform of
\bea
F(\omega) = \prod_x \frac{1}{(\omega + \omega_x)^{\flux_x}}.
\eea
The result is a generalization of Heaviside's expansion theorems \cite{abrasteg}. We employ the well-known residue formula
\bea
f(t) := \mathcal{L}^{-1} \left[F(\omega)\right](t)  =  \sum_x \mathrm{Res}\, \left[ F(\omega) e^{\omega t} ,-\omega_x \right].
\eea
Since $-\omega_x$ is a pole of degree $\flux_x$, we have
\bea
f(t) = \sum_x \frac{1}{(\flux_x -1)!} \lim_{\omega \to -\omega_x} G^{(\flux_x-1)}_x(\omega) \label{eq:inverselap}
\eea
where we introduced
\bea
G^{(p)}_x(\omega) := \frac{d^p}{d\omega^p} \Big[ e^{\omega t} \prod_{y\neq x} (\omega + \omega_{y})^{-\flux_{y}} \Big].
\eea
Interpreting $G^{(0)}_k$ as a moment generating function, instead of directly evaluating moments it's simpler to evaluate the cumulants
\bea
H^{(q)}_x(\omega) & := & \frac{d^q}{d\omega^q} \ln G^{(0)}_x(\omega) \\
& = & t\, \delta_{q,1}  + (-1)^q (q-1)!  \sum_{y\neq x}\frac{\flux_{y}}{(\omega + \omega_{y})^q} \\
& =: & t\, \delta_{q,1} + \bar{H}^{(q)}_x(\omega).
\eea
The generalization of the chain rule to higher derivatives is given by Fa\`a di Bruno's formula, that simplifies for the exponential function $G^{(0)}_x(\omega) = \exp H^{(0)}_x(\omega)$:
\bea
G^{(p)}_x(\omega) & = & G^{(0)}_x(\omega)\,\sum \frac{p!}{m_1!\ldots m_p!} \prod_{q = 1}^p \left[ \frac{H^{(q)}_x(\omega)}{q!}\right]^{m_q}\\
& = &  G^{(0)}_x(\omega) B_p\left(H^{(1)}_x(\omega),\ldots,H^{(p)}_x(\omega)\right)
\eea
where the sum runs over all $m_1,\ldots,m_p \geq 0$ such that
\bea
\sum_{q=1}^p q m_q = p.
\eea
The second line identifies the complete Bell polynomials. Let us now isolate powers of $t$ using of the binomial formula, and rearrange summations. After some work we obtain
\bea
G^{(p)}_x(\omega) & = & G^{(0)}_x(\omega)\,\sum \left[t + \bar{H}^{(1)}_x(\omega) \right]^{m_1} \frac{p!}{m_1!\ldots m_p!} \prod_{q = 2}^p \left[\frac{ \bar{ H}^{(q)}_x(\omega)}{q!}\right]^{m_q} \nonumber \\
& = & G^{(0)}_x(\omega)\,\sum_{\{m_q\}} \sum_{n_x=0}^{m_1} \frac{m_1!}{n_x! (m_1-n_x)!} t^{n_x} \left[ \bar{H}^{(1)}_x(\omega) \right]^{m_1-n_x} \frac{p!}{m_1!\ldots m_p!} \prod_{q = 2}^p \left[\frac{ \bar{ H}^{(q)}_x(\omega)}{q!}\right]^{m_q} \nonumber \\
& = & G^{(0)}_x(\omega)\, \sum_{n_x=0}^{p}  \frac{t^{n_x}}{n_x!} \sum_{\{\bar{m}_q\}} \frac{p!}{\bar{m}_1!\ldots \bar{m}_p!} \prod_{q = 1}^p \left[\frac{ \bar{ H}^{(q)}_x(\omega)}{q!}\right]^{\bar{m}_q} 
\eea
where we have defined $\bar{m}_1 = m_1 - n_x$, $\bar{m}_{j>1} = m_{j>1}$ and now the sum over the $\{\bar{m}_j\}$ is constrained by
\bea
\sum_{q=1}^p q \bar{m}_q = p - n_x.
\eea
Finally
\bea
f(t) = \sum_x e^{-\omega_x t} \sum_{n_x=0}^{\flux_x -1}  \frac{t^{n_x}}{n_x!} C_x^{n_x}
\eea
with 
\bea
C_x^{n_x} = (-1)^{\flux_x -1 - n_x} \prod_{y\neq x} (\omega_{y}-\omega_x)^{-\flux_{y}}
 \sum_{\{\bar{m}_q\}} \prod_{q = 1}^{\flux_x -1} \frac{1}{\bar{m}_q!} \left[
 \frac{1}{q}  \sum_{y\neq x}\frac{\flux_{y}}{( \omega_{y} - \omega_x)^q} 
\right]^{\bar{m}_q}
\eea
We can further introduce a multinomial expansion with $\sum_{y\neq x} m_{q,y} = m_q$, define $m_y = \sum_q q m_{q,y}$, replace $\sum_{\{\bar{m}_q\}}  \sum_{\{m_{q,y}\}} $ by $ \sum_{\{m_y\}}  \sum_{\{m_{q,y}\}}$, keeping into account the constraints $\sum_{y\neq x} m_y = \flux_x - n_x - 1$ and $\sum_{q = 1}^{\flux_x -1} q m_{q,y} = m_y$, to obtain
\bea
C_x^{n_x} & = & (-1)^{\flux_x -1 - n_x} \prod_{y\neq x} (\omega_{y}-\omega_x)^{-\flux_{y}}
 \sum_{\{\bar{m}_q\}}  \sum_{\{m_{q,y}\}}   \prod_{q = 1}^{\flux_x -1} \prod_{y \neq x} \frac{1}{m_{q,y}!} \frac{(\flux_{y}/q)^{m_{q,y}}}{( \omega_{y} - \omega_x)^{q m_{q,y}}} \nonumber \\
 & = & (-1)^{\flux_x -1 - n_x} 
 \sum_{\{m_y\}}\frac{1}{\prod_{y \neq x}( \omega_{y} - \omega_x)^{\flux_y + m_{y}}}  \sum_{\{m_{q,y}\}}  \prod_{q = 1}^{\flux_x -1}  \prod_{y \neq x} \frac{(\flux_{y}/q)^{m_{q,y}}}{m_{q,y}!}. 
\eea

\section{Example} 

\label{app3}

Consider as the state space the following graph:
\bea
\xymatrix{ a  \ar@{-}[r] & b \ar@{-}[d] \\ 
d \ar@{-}[ur] \ar@{-}[u] & c \ar@{-}[l]}
\eea
A configuration of fluxes identifies the following quasi-balanced digraph:
\bea
\xymatrix{a  \ar@/^{3pt}/[r]   \ar@/_{3pt}/[r]  & b  \ar[d]  
\ar@/^{4pt}/[dl]  \ar[dl]   \\ 
d  \ar[u]   \ar@/^{4pt}/[ur]  & c   \ar[l]   }
\eea
The Laplacian matrix reads
\bea
\Delta = \left(\ba{cccc} 2 & 0 & 0 & -1 \\ -2 & 3 & 0 & -1 \\ 0 & -1 & 1 & 0 \\ 0 & -2 & -1 & 2 \ea\right).
\eea
Let us enumerate the 6 independent Eulerian paths going from the source $a$ to the sink $d$ (to do so systematically one can proceed as follows: start at $a$, and every time choices ramify into several possible directions, choose the one with the lowest letter until all transitions are exhausted; then choose the second lowest at the latest ramification, and so on\ldots):
\bea
abcdabdbd \nonumber \\
abcdbdabd \nonumber \\
abdabcdbd \nonumber \\
abdabdbcd \nonumber \\
abdbcdabd \nonumber \\
abdbdabcd
\eea
Hence, accounting for the Boltzmann factorials for the fluxes, we obtain $6 \times 2! \times 2! = 24$ Eulerian tours. 

There are two spanning tree pointing at $d$:
\bea
\ba{c}
\xymatrix{a  \ar[r] & b   \ar[d]
  \\ 
d   & c \ar[l] }\ea \qquad \qquad \ba{c} \xymatrix{a  \ar[r] & b 
\ar[dl]  \\ 
d   & c \ar[l] } \ea
\eea
The first occurs $2$ times because of the choice of edge $a \to b$, the second $4$ times because of the choice of edges $a \to b$ and $b \to d$, yielding $6$ spanning trees (that this number coincides with the number of independent paths is a mere coincidence). Finally we have that this number has to be multiplied by the factorials at the vertices, $(3-1)!$ at $b$ and $2!$ at $d$, yielding $24$ Eulerian tours.

The cofactor reads
\bea
\det \Delta_{d} = \left| \ba{ccc} 2 & 0 & 0  \\ -2 & 3 & 0  \\ 0 & -1 & 1 \ea\right|  = 6
\eea
and indeed it yields the number of oriented spanning tree rooted at $d$.

\section*{References}


\begin{thebibliography}{20}

\bibitem{einstein} Einstein A, {\it The theory of opalescence of homogeneous fluids and liquid mixtures near the critical state} (in J. Stachel, editor, The Collected Papers of Albert Einstein, volume Vol. 3, pages 231Ð249. Princeton University
Press, Princeton, 1987).

\bibitem{sanov} Sanov I N 1957, {\it On the probability of large deviations of random variables}, Mat. Sbornik {\bf 42} 11.

\bibitem{esposito} Esposito M, Harbola U and Mukamel S 2009, {\it Nonequilibrium fluctuations, fluctuation theorems, and counting statistics in quantum systems}, Rev. Mod. Phys. {\bf 81} 1665.

\bibitem{crooks} Crooks G E 1998, {\it Nonequilibrium measurements of free energy differences for microscopically reversible markovian systems}, J. Stat. Phys. {\bf 90} 1481.

\bibitem{kurchan} Kurchan J 1998, {\it Fluctuation theorem for stochastic dynamics}, J. Phys. A.: Math. Gen. {\bf 31} 3719.

\bibitem{maes} Maes C 1999 {\it The fluctuation theorem as a Gibbs property}, J. Stat. Phys. {\bf 95} 367.

\bibitem{lebowitz} Lebowitz J L and Spohn H 1999 {\it A Gallavotti-Cohen-Type Symmetry in the Large Deviation Functional for Stochastic Dynamics}, J. Stat. Phys. {\bf 95} 333.

\bibitem{gaspard} Andrieux D and Gaspard P 2004, {\it Fluctuation theorem and Onsager reciprocity relations}, J. Chem. Phys. {\bf 121} 6167.

\bibitem{schutz} Harris R J and Sch\"utz G M 2007, {\it Fluctuation theorems for stochastic dynamics}, J. Stat. Mech. P07020.

\bibitem{polettiniFT} Polettini M and Esposito M 2014, {\it Transient fluctuation theorem for the currents and initial equilibrium ensembles}, J. Stat. Mech. P10033.

\bibitem{touchette1} Touchette H 2009, {\it The large deviation approach to statistical mechanics}, Phys. Rep. {\bf 478} 1.

\bibitem{giardina} Giardina C,  Kurchan J and Peliti L 2006, {\it Direct evaluation of large-deviation functions}, Phys. Rev. Lett. {\bf 96} 120603.

\bibitem{tailleur1} Tailleur J and Lecomte V 2009, {\it Simulation of large deviation functions using population dynamics}, AIP Conf. Proc, Vol 1091, pp. 212-219 (2009) MODELING AND SIMULATION OF NEW MATERIALS: Proceedings of Modeling and Simulation of New Materials: Tenth Granada Lectures.

\bibitem{nemoto} Nemoto T and Sasa S-I 2011, {\it Thermodynamic formula for the cumulant generating function of time-averaged current}, Phys. Rev. E {\bf 84} 061113.

\bibitem{koza} Koza Z 1999, {\it General technique of calculating drift velocity and diffusion coefficient in arbitrary periodic systems}, J. Phys. A: Math. Gen. {\bf 32} 7637.

\bibitem{wachtel} Wachtel A, Vollmer J and Altaner B 2014, {\it Fluctuating Currents in Stochastic Thermodynamics I. Gauge Invariance of Asymptotic Statistics}, arXiv:1407.2065.

\bibitem{plenio} Bruderer M, Contreras-Pulido, Thaller M, Sironi L, Obreschkow D and Plenio M B 2014, {\it Inverse counting statistics for stochastic and open quantum systems: the characteristic polynomial approach}, New J. Phys. {\bf 16} 033030.

\bibitem{lecomte2} Lecomte V, Appert-Rolland C, and Van Wijland F 2007, {\it Thermodynamic formalism for systems with Markov dynamics}, J. Stat. Phys. {\bf 127} 51.

\bibitem{andrieux} Andrieux D 2012 {\it Equivalence classes for large deviations}, arXiv:1208.5699; {\it Nonequilibrium large deviations are determined by equilibrium dynamics} arXiv:1212.1807.

\bibitem{chetrite1} Chetrite R and Touchette H 2013, {\it Nonequilibrium microcanonical and canonical ensembles and their equivalence}, Phys. Rev. Lett. {\bf 111} 120601.

\bibitem{tailleur2} Tailleur J, Lecomte V and Kurchan J 2008, {\it Mapping out-of-equilibrium into equilibrium in one-dimensional transport models}, J. Phys. A: Math. Theor. {\bf 41} 505001.

\bibitem{hurtado} Espigares C P, Garrido P L and Hurtado P I
2013, {\it Dynamical phase transition for current statistics in a simple driven diffusive system}, Phys. Rev. E {\bf 87} 032115.

\bibitem{bouchet} Bouchet F, Laurie J and Zaboronski O 2014, {\it Langevin dynamics, large deviations and instantons for the quasi-geostrophic model and two-dimensional Euler equations}, J. Stat. Phys. {\bf 156} 1066.

\bibitem{jona} Bertini L, De Sole A, Gabrielli D, Jona-Lasinio G and Landim C 2005, {\it Macroscopic current fluctuations in stochastic lattice gases}, Phys. Rev. Lett. {\bf 94} 030601.

\bibitem{verley} Verley G, Willaert T, Van den Broeck C, and Esposito M 2014, {\it Universal theory of efficiency fluctuations}, Phys. Rev. E {\bf 90} 052145.

\bibitem{polettinieff} Polettini M, Verley G and Esposito M 2015,
 {\it Efficiency Statistics at All Times: Carnot Limit at Finite Power}, Phys. Rev. Lett. {\bf 114} 050601.

\bibitem{derrida} Derrida B 2007, {\it Non equilibrium steady states: fluctuations and large deviations of the density and of the current}, J. Stat. Mech. P07023.

\bibitem{lazarescu2} Lazarescu A and Mallick K 2011, {\it An exact formula for the statistics of the current in the TASEP with open boundaries}, J. Phys. A: Math. Theor. {\bf 44} 315001.

\bibitem{maes2} Maes C, Neto\v cn\'y K and Wynants B 2009, {\it Steady state statistics of driven diffusions}, J. Phys. A: Math. Theor. {\bf 42} 365002.

\bibitem{caracciolo} Caracciolo S, Jacobsen J L, Saleur H, Sokal A D and Sportiello A 2004 {\it Fermionic field theory for trees and forests}, Phys. Rev. Lett. {\bf 93} 080601.

\bibitem{abdesselam}  Abdesselam A 2004 {\it The Grassmann-Berezin calculus and theorems of the matrix-tree type}, Adv. App. Math. {\bf 33} 51. 

\bibitem{polettini1} Polettini M 2012, {\it Nonequilibrium thermodynamics as a gauge theory}, Eur. Phys. Lett. {\bf 97} 30003.

\bibitem{polettiniminEP} Polettini M 2011, {\it Macroscopic constraints for the minimum entropy production principle}, Phys. Rev. E {\bf 84} 051117.

\bibitem{maes1} Baerts P, Basu U, Maes C and Safaverdi S 2013, {\it The frenetic origin of negative differential response}, Phys. Rev. E  {\bf 88} 052109.

\bibitem{stanley} Stanley R P 1999, {\it Enumerative combinatorics  Vol. 2} (Cambridge, Cambridge Univ. Press).

\bibitem{BEST} See \texttt{http://people.sju.edu/\~{}pklingsb/best.pdf} for a pedagogical derivation.

\bibitem{gatver} Verley G 2014, {\it Non-equilibrium thermodynamic potentials for continuous-time Markov chains}, arXiv:1508.01143.

\bibitem{kesidis} Kesidis G, Walrand J 1993, {\it Relative entropy between Markov transition rate matrices},
IEEE Trans. Inform. Theory {\bf 39} 1056.

\bibitem{fayolle} Fayolle G, de La Fortelle A 2002, {\it Entropy and Large Deviations for Discrete-Time Markov Chains}, Problems of Information Transmission {\bf 38} 354. 

\bibitem{polettiniGRAPH} Polettini M 2015, {\it Cycle/cocycle oblique projections on oriented graphs}, Lett. Math. Phys.  {\bf 105} 89.

\bibitem{gaveau} Gaveau B and Schulman L S 1996, {\it Master equation based formulation of nonequilibrium statistcal mechanics}, J. Math. Phys. {\bf 37} 3897.

\bibitem{ddg} Desburn M, Kanso E and Tong Y 2006, {\it Discrete differential forms for computational modeling}, in {\it Discrete Differential Geometry}, edited by  P Schr\"oder, J M Sullivan and G M ziegler (Basel:Birkh\"auser).

\bibitem{supersym} Tailleur J, T\u anase-Nicola S, and Kurchan J 2006, {\it Kramers equation and supersymmetry}, J. Stat. Phys. {\bf 122}, 557.

\bibitem{wynants} Wynants B 2010, {\it Structures of Nonequilibrium Fluctuations} (Doctoral dissertation, Ecole Normale Sup\'erieure, Paris).

\bibitem{polettini4} Polettini M 2013 {\it Of dice and men. Subjective priors, gauge invariance, and nonequilibrium thermodynamics}, conference proceedings JETC2013.

\bibitem{cooper} Cooper F and Dawson J F 2014, {\it Auxiliary field loop expansion of the effective action for stochastic partial differential equations}, arXiv:1410.8086.

\bibitem{flomenbom} Flomenbom O and Klafter J 2005, {\it Closed-Form Solutions for Continuous Time Random Walks on Finite Chains}, Phys. Rev. Lett {\bf 95} 098105.

\bibitem{hille} Shalizi C, {\it Generators of Markov processes}, lecture notes available at \texttt{http://www.stat.cmu.edu/\textasciitilde cshalizi/754/notes/lecture-12.pdf}

\bibitem{bertini2} Bertini L, Faggionato A and Gabrielli D  2014, {\it Flows, currents and cycles for Markov Chains: Large deviations asymptotics}, {\it Stoc. Process. Appl.} {\bf 125} 2786.


\bibitem{ethier} Ethier S N and Kurtz T G 1986, {\it Markov Processes - Characterization and Convergence} (Hoboken, John Wiley and Sons).

\bibitem{nemoto2} Nemoto T, Lecomte V, Sasa S and van Wijland F 2014, {\it  Finite-size effects in a mean-field kinetically constrained model: dynamical glassiness and quantum criticality}, J. Stat. Mech. P10001.

\bibitem{bodineau} Bodineau T and Toninelli C 2012, {\it Activity phase transition for constrained dynamics}, Comm. Math. Phys. {\bf 311} 357.

\bibitem{visco} Visco P, van Wijland F, and Trizac E 2008, {\it Non-Poissonian statistics in a low-density fluid}, J. Chem. Phys. B {\bf 112} 5412.

\bibitem{budini} Budini A A, Turner R and Garrahan J P 2014, {\it Fluctuating observation time ensembles in the thermodynamics of trajectories}, J. Stat. Mech. P03012.

\bibitem{enumeration}Aigner M 2007, {\it A Course in Enumeration} (Heidelberg. Springer-Verlag).

\bibitem{abrasteg} Abramowitz M and Stegun I 1970, {\it Handbook of mathematical functions}  (New York, Dover).

\end{thebibliography}
\end{document}